\journal{Finite Fields and Their Applications}
\newtheorem{theorem}{Theorem}
\newtheorem{lemma}[theorem]{Lemma}
\newtheorem{example}[theorem]{Example}
\newtheorem{remark}[theorem]{Remark}
\newcommand{\bb}{\mathbb}
\newcommand{\Z}{\bb{Z}}
\newcommand{\cS}{\mathcal{S}}
\newcommand{\cC}{\mathcal{C}}
\newcommand{\cD}{\mathcal{D}}
\newcommand{\be}{\beta}
\newcommand{\de}{\delta}
\newcommand{\De}{\Delta}
\newcommand{\ga}{\gamma}
\newcommand{\zp}{\zeta_{p}}
\newcommand{\zs}{\{0\}}
\newcommand{\sm}{\setminus}
\newcommand{\supp}{\text{supp}}
\newcommand{\ul}{\underline}
\newcommand{\lan}{\langle}
\newcommand{\ran}{\rangle}
\newcommand{\wti}{\widetilde}
\newcommand{\Tr}{\text{Tr}}
\newcommand{\TQq}{\Tr^{Q}_{q}}
\newcommand{\Tqp}{\Tr^{q}_{p}}
\newcommand{\F}{\mathbb{F}}
\newcommand{\Fp}{\mathbb{F}_p}
\newcommand{\Fq}{\mathbb{F}_q}
\newcommand{\FQ}{\mathbb{F}_Q}
\newcommand{\ux}{\underline{x}}
\newcommand{\ub}{\underline{b}}
\newcommand{\uy}{\underline{y}}
\newcommand{\uu}{\underline{u}}
\newcommand{\subsp}{\left[{\FQ^t \atop r}\right]}
\newcommand{\dimq}{\dim_{\Fq}}
\newcommand{\Hp}{H^{\perp}}
\newcommand{\Hrp}{H_r^{\perp}}
\newcommand{\hm}{\frac{m}{2}}
\newcommand{\Fqhm}{\F_{q^{\hm}}}
\newcommand{\CN}{C^{(N,Q)}}
\newcommand{\etaN}{\eta^{(N,Q)}}
\newcommand{\etaNg}{\etaN_{g^h\sum_{j=1}^tb_j\be_j^h}}
\newcommand{\etaNy}{\etaN_{y_h}}
\begin{document}

\begin{frontmatter}

\title{The Weight Hierarchy of a Family of Cyclic Codes with Arbitrary Number of Nonzeroes}

\author[rvt]{Shuxing Li}
\ead{lsxlsxlsx1987@gmail.com}

\address[rvt]{Department of Mathematics, The Hong Kong University of Science and Technology,
Clear Water Bay, Kowloon, Hong Kong}

\begin{abstract}
The generalized Hamming weights (GHWs) are fundamental parameters of linear codes. GHWs are of great interest in many applications since they convey detailed information of linear codes. In this paper, we continue the work of \cite{XLG} to study the GHWs of a family of cyclic codes with arbitrary number of nonzeroes. The weight hierarchy is determined by employing a number-theoretic approach.
\end{abstract}

\begin{keyword}
Cyclic codes \sep Generalized Hamming weights \sep Weight hierarchy
\MSC 11T71 \sep  94B05 \sep 94B15
\end{keyword}

\end{frontmatter}

\section{Introduction}\label{sec1}

An $[n,k]$ linear code $\cC$ over finite field $\Fq$ is a $k$-dimensional subspace of the linear space $\Fq^n$. For any linear subcode $\cD \subset \cC$, the {\em support} of $\cD$ is defined to be
$$
\supp(\cD)=\{i: 0 \le i \le n-1, c_i \ne 0 \mbox{ for some } (c_0,c_1,\ldots,c_{n-1}) \in \cD \}.
$$
For $1 \le r \le k$, the $r$-th {\em generalized Hamming weight} (GHW) of $\cC$ is given by
$$
d_r(\cC)=\min \left\{|\supp(\cD)|: \cD \subset \cC \mbox{ and }  \dimq(\cD)=r\right\},
$$
where $|\supp(\cD)|$ denotes the cardinality of the set $\supp(\cD)$. By definition, $d_1(\cC)$ is just the minimum distance of $\cC$. The set $\{d_r(\cC): 1 \le r \le k\}$ is called the {\em weight hierarchy} of $\cC$.

The concept of GHWs was first introduced by Helleseth, Kl{\o}ve, Mykkeltveit \cite{HKM,K1} and was used in the computation of weight distributions. It was rediscovered by Wei \cite{Wei} to fully characterize the performance of linear codes when used in a wire-tap channel of type II or as a $t$-resilient function. Indeed, the GHWs provide detailed structural information of linear codes, which can also be used to compute the state and branch complexity profiles of linear codes \cite{F,KTFL}, to determine the erasure list-decodability of linear codes \cite{G} and so on.

In general, the determination of weight hierarchy is very difficult and there are only a few classes of linear codes whose weight hierarchies are known (see \cite{XLG} for a comprehensive enumeration of related references). This paper continues the work of \cite{XLG} to determine the weight hierarchy of a family of cyclic codes with arbitrary number of nonzeroes. Our result can be regarded as an extension of the results in \cite{HK2,V1,YLFL}, where the weight hierarchy of the semiprimitive codes was computed. We achieve this by generalizing a number-theoretic approach introduced in \cite{YLFL}.

The rest of this paper is organized as follows. In Section~\ref{sec2}, we introduce the concerned family of cyclic codes and state the main result. In Section~\ref{sec3}, we present a number-theoretic approach to the computation of GHWs. In Section~\ref{sec4}, we prove the main result. Section~\ref{sec5} concludes the paper.

\section{Main Result}\label{sec2}

In this section, we introduce the concerned family of cyclic codes and describe our main result.

At first, we set up some notations which will be used throughout the rest of the paper. Let $q=p^s$, $Q=q^m$, where $p$ is a prime, $s$ and $m$ are positive integers. Let $\ga$ be a primitive element of the finite field $\FQ$. We have the following three assumptions:
\begin{itemize}
\item[i)] $e \mid (Q-1)$, $a \not\equiv 0 \pmod{Q-1}$, $e \ge t \ge 1$;
\item[ii)] For $1 \le i \le t$, $a_i \equiv a+\frac{Q-1}{e}\De_i \pmod{Q-1}$. When $t\ge 2$, $\De_i \not\equiv \De_j \pmod{e}$ for $1 \le i,j \le t$, $i \ne j $ and $\gcd(\De_2-\De_1,\ldots,\De_t-\De_1,e)=1$;
\item[iii)] $\deg h_{a_i}(x)=m$ for $1 \le i \le t$ and $h_{a_i}(x) \ne h_{a_j}(x)$ for $i \ne j$. Here $h_a(x)$ denotes the minimal polynomial of $\ga^{-a}$ over $\Fq$.
\end{itemize}
Let us define
\begin{equation}\label{codepara}
\begin{split}
\de&=\gcd(Q-1,a_1,a_2,\ldots,a_t),\\
n&=\frac{Q-1}{\de}, \qquad N=\gcd\left(\frac{Q-1}{q-1},ae\right).
\end{split}
\end{equation}
When $e=t$, without loss of generality, we can choose $\De_i=i-1$ for $1 \le i \le t$.

We define $\cC$ to be the cyclic code of length $n$ over $\Fq$, whose parity-check polynomial is
$\prod_{i=1}^t h_{a_i}(x)$, where $a_i$'s are specified according to the three assumptions. Therefore, $\cC$ is an $[n,tm]$ cyclic code with $t$ nonzeroes.

This family of cyclic codes was first introduced in \cite{YXDL}, where the weight distributions were computed in several cases \cite{YXDL,YXX}. Due to the flexibility of the parameters $q$, $m$, $a$, $e$, $t$ and $\De_i$, this family contains an abundance of cyclic codes and some of which are interesting cyclic codes \cite{XLG}. In fact, \cite{YXDL} and \cite{YXX} presented a unified approach to the computation of weight distributions of certain cyclic codes, which included many previous results as special cases. Moreover, these results suggest that this family of codes is highly structured and it is hopeful to obtain more detailed information such as the generalized Hamming weights. Therefore, in \cite{XLG}, the authors obtained the weight hierarchy in the following cases:
\begin{itemize}
  \item $N=1,2$ and $e=t \ge 1$,
  \item $N=1$, $e>t \ge 1$ and $\{\De_1\pmod{e},\ldots,\De_t\pmod{e}\}$ is an arithmetic progression.
\end{itemize}
The computation relies heavily on generalizing a number-theoretic idea proposed in \cite{YLFL}. A key point in the computation is that, when $N=1$ or $2$, the evaluation of the corresponding Gauss periods is very simple. Note that the next simplest case for the evaluation of Gauss periods is the so-called semiprimitive case. Let $p$ be a prime and $N$ be an integer with $N>2$. In the semiprimitive case, there exists a positive integer $j$, such that $p^j \equiv -1 \pmod{N}$. In this paper, we consider the GHWs of $\cC$ in the semiprimitive case. More specifically, we have the following main result.

\begin{theorem}\label{thm-semi}
Let $p$ be a prime. Set $q=p^s$ and $Q=q^m$. Suppose $a$, $e$, $t$ and $a_i$, $1 \le i \le t$, are positive integers satisfying $e=t$ and the assumptions i) and ii). Let $n$ and $N$ be positive integers specified in \eqref{codepara}, satisfying $2 < N \le \sqrt{Q}$. Let $\cC$ be a cyclic code of length $n$, having parity-check polynomial $\prod_{i=1}^t h_{a_i}(x)$. Suppose $j$ is the smallest positive integer such that $p^j \equiv -1 \pmod{N}$. Suppose $\frac{sm}{2j}$ is odd and $m$ is even. For $1 \le r \le tm$, write $tm-r=r_1m+r_2$, where $0 \le r_1 \le t-1$ and $0 \le r_2 \le m-1$. Suppose $d_r:=d_r(\cC)$ is the $r$-th GHW of $\cC$. Then
\begin{equation*}
d_r=\begin{cases}
      (1-\frac{r_1}{t})\frac{q^m-1}{\de}-\frac{N(q^{r_2}-1)}{t\de} & \mbox{if $0 \le r_2 < \hm$,} \\
      (1-\frac{r_1}{t})\frac{q^m-1}{\de}-\frac{q^{r_2}-1+(N-1)(q^{\hm}-q^{r_2-\hm})}{t\de} & \mbox{if $\hm \le r_2 < m$.}
    \end{cases}
\end{equation*}
\end{theorem}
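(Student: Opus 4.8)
The plan is to follow the number-theoretic framework of Section~\ref{sec3}, reducing the computation of $d_r$ to an extremal problem about $\Fq$-subspaces. Writing a codeword of $\cC$ as $\big(\TQq(\sum_{i=1}^t b_i\ga^{a_ik})\big)_{k}$ with message $(b_1,\dots,b_t)\in\FQ^t$, a coordinate $k$ is a common zero of the $r$-dimensional subcode attached to a subspace $U$ precisely when the point $P_k=\ga^{ak}(1,\be^k,\dots,\be^{(t-1)k})$ (where $\be=\ga^{(Q-1)/e}$, using $e=t$ and $\De_i=i-1$) lies in the trace-dual $H=U^{\perp}$, an $\Fq$-subspace of $\FQ^t$ of dimension $tm-r=r_1m+r_2$. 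Hence $d_r=n-\max_H|\{k:P_k\in H\}|$, the maximum ranging over all such $H$. First I would make this identification precise and record that the proposed formula is equivalent to the single assertion $\max_H|\{k:P_k\in H\}|=\big(r_1(q^m-1)+C(r_2)\big)/(t\de)$, where $C(r_2)$ is the correction term appearing in the two cases.

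Next I would convert the point count into character sums and recognize the $N$-th cyclotomic Gauss periods $\etaN$. Fixing $H$ and summing the canonical additive character over $H^{\perp}$ and over $k$ turns $|\{k:P_k\in H\}|$ into a main term plus a sum of Gauss sums indexed by the nonzero elements of $H^{\perp}$; grouping these by the cyclotomic classes $\CN$ produces a weighted sum of the periods $\etaN_i$. The hypotheses $2<N\le\sqrt Q$, $p^j\equiv-1\pmod N$, together with $\frac{sm}{2j}$ odd and $m$ even, place us in the semiprimitive regime, so $\etaN_i$ takes exactly two explicit values differing by a multiple of $\sqrt Q=q^{\hm}$, with the sign pinned down by the parity of $\frac{sm}{2j}$. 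Substituting these values expresses $|\{k:P_k\in H\}|$ explicitly in terms of how $H^{\perp}$ distributes across the cyclotomic classes, i.e. in terms of the intersection numbers of $H$ with the relevant multiplicative cosets and with the subfield $\Fqhm$.

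The heart of the argument is then the optimization of this explicit expression over all $\Fq$-subspaces $H$ of dimension $r_1m+r_2$. I would first treat the $r_1$ ``full'' blocks, showing that each contributes a clean fraction $1/t$ of the $n$ coordinates, which yields the leading term $r_1(q^m-1)/(t\de)$ independently of the semiprimitive data. The remaining $r_2$-dimensional part is where the two period values compete: since $q^{\hm}=\sqrt Q$ and $N\le q^{\hm}$, an $H$ that stays inside a single coset can collect the full weight $N$ per captured point as long as its $\Fqhm$-dimension leaves room, giving the factor $N(q^{r_2}-1)$ while $r_2<\hm$; once $r_2\ge\hm$ the subspace saturates $\Fqhm$ and the surplus must spill into classes carrying the smaller period value, producing the crossover to $q^{r_2}-1+(N-1)(q^{\hm}-q^{r_2-\hm})$. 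I would construct an explicit $H$ attaining each value and check that the two cases agree at $r_2=\hm$.

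The main obstacle is the matching upper bound: proving that no subspace $H$ of the prescribed dimension captures more points than the constructed optimum. The difficulty is that the objective mixes two period values weighted by class-incidence counts of $H$, so a naive dimension count does not suffice; one must show that concentrating $H$ in the high-value class is optimal and that, past the threshold $r_2=\hm$, any further gain is exactly offset by forced incidences with low-value classes. I expect this to require a careful rearrangement argument on the incidence profile of $H$ with the cyclotomic classes, controlled by the constraint that an $\Fq$-subspace meeting $\Fqhm$ in a given dimension cannot avoid a prescribed number of cosets — essentially a subspace-intersection bound in $\FQ^t$ that forces the $\hm$ threshold.
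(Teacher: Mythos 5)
Your first paragraph matches the paper's starting point (equation (\ref{exp-GHW0}) and the Delsarte correspondence), and your overall skeleton --- character sums, an $\Fqhm$-threshold at $r_2=\hm$, explicit constructions plus a matching upper bound --- is the right one. But two ideas that constitute the actual proof are missing, and one of them is hidden inside an ``i.e.''\ in your second paragraph. You expand the indicator of $H$ over the $r$-dimensional space $H^{\perp}$ and obtain a period sum indexed by elements of $H^{\perp}$ (this is the paper's first expression (\ref{exp-GHW2})), and you then propose to plug in the two-valued semiprimitive period values; but your optimization in paragraphs 3--4 is carried out on the \emph{other} space, namely on how the $(tm-r)$-dimensional $H$ meets cyclotomic classes coordinate-by-coordinate. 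These are not reformulations of one another. The paper's key move is to expand the periods back into additive characters and apply orthogonality a second time, yielding (\ref{exp-GHW3})--(\ref{exp-GHW4}): $N(H_r)=\frac{N}{t\de}\sum_{h=1}^t|U_h\cap W_h|$, where $U_h$ is the intersection of $\Hrp$ with the $h$-th coordinate axis and $W_h$ involves $-\CN_0$. This identity does two things at once: it eliminates the period values entirely (so one never needs the sign/parity case analysis of semiprimitive periods, where which class carries the large value depends on the parities of $p$, $\frac{sm}{2j}$, $\frac{p^j+1}{N}$), and it decouples the $t$ coordinates, reducing the problem to the single-coordinate extremal Lemma~\ref{lem-intsemi} plus an optimization over the dimension profile $(u_1,\dots,u_t)$. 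If you instead substitute period values into the first expression as planned, your objective involves, for each $h$, both the dimension of the projection of $H^{\perp}$ to the $h$-th coordinate and the class-distribution of that projection, weighted by $q^{-\dim}$; these data are coupled across $h$ because they come from one common $r$-dimensional space, and your sketch gives no way to untangle them. Your assumption that the optimum splits into ``$r_1$ full blocks plus an $r_2$-dimensional part'' is exactly what this decoupling is needed to justify.

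Second, even after decoupling, the profile optimization is not the ``concentration is optimal'' exchange argument you anticipate. The per-coordinate maximum $f(l)$ of Lemma~\ref{lem-intsemi} changes regime at $l=\hm$, and the exchange moving one dimension from a low coordinate to a high coordinate does \emph{not} always increase the total: Lemma~\ref{lem-semiop}(2) shows the inequality flips according to whether $u_i-u_j\ge \hm-v-1$ or $u_i-u_j\le \hm-v-2$, where $v$ is defined by $q^v\le\frac{q^{\hm}+1}{N}-1<q^{v+1}$. This forces the paper to introduce the inverse operation $\wti{\cS_{ij}^2}$, the operation $\cS$ trading two entries equal to $\hm$ for $(m,0)$, and a final comparison of four candidate profiles (Cases A1, A2, B1, B2) before concluding that $(m,\dots,m,r_2,0,\dots,0)$ wins. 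A naive greedy rearrangement fails precisely at the threshold your own formula exhibits. By contrast, the single-coordinate bound you invoke (a subspace of dimension $l\ge\hm$ cannot avoid the cosets $\ga^c\Fqhm$) is sound and is exactly how Lemma~\ref{lem-intsemi} is proved, via $\FQ^*=\bigcup_c \ga^c\Fqhm^*$ and $|L\cap(\ga^c\Fqhm)^*|\ge q^{l-\hm}-1$; that part of your plan, together with your constructions and the check that the two cases agree at $r_2=\hm$, would go through.
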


We have the following three remarks.

\begin{remark}
According to {\rm\cite[Lemma 6]{YXDL}}, the assumption iii) always holds true if $N \le \sqrt{Q}$. Hence, in Theorem~\ref{thm-semi}, we only need to choose integers $a$, $e$, $t$ and $a_i$, $1 \le i \le t$, that satisfying the assumptions i) and ii). Meanwhile, the condition $2<N \le \sqrt{Q}$ ensures that the assumption iii) also holds.
\end{remark}

\begin{remark}
In the semiprimitive case, by choosing $e=t=1$, $a=N$ and $\De_1=0$, the resulting code $\cC$ is simply a semiprimitive code. The GHWs of semiprimitive codes has been studied in {\rm\cite{HK2,V1,YLFL}}. More precisely, when the code $\cC$ is a semiprimitive code, Theorem~\ref{thm-semi} reduces to {\rm\cite[Theorem 3]{HK2}} and {\rm\cite[Theorem 4.1]{V1}} for $q=2$, and to {\rm\cite[Corollary 15]{YLFL}} for general $q$. When $t \ge 2$, i.e., the code $\cC$ is a reducible cyclic code, the result of Theorem~\ref{thm-semi} is new.
\end{remark}

\begin{remark}
The conditions $\frac{sm}{2j}$ being odd and $m$ being even are crucial. In fact, these two conditions are essentially used in the computation of the weight hierarchy of binary semiprimitive codes {\rm\cite{HK2,V1}}. Without them, the determination of the weight hierarchy, even for the simplest binary semiprimitive codes, remains a challenging problem.
\end{remark}

To confirm the correctness of Theorem \ref{thm-semi}, we provide some numerical examples, which are obtained by using {\bf Magma}.

\begin{example}
For $q=7$, $m=2$, $e=t=2$ and $a=6$, we have an $[8,4,2]$ cyclic code over $\F_7$ with $N=4$. The weight hierarchy of this code is as follows
$$
d_1=2, d_2=4, d_3=6, d_4=8,
$$
which coincides with the result of Theorem \ref{thm-semi}.
\end{example}

\begin{example}
For $q=7$, $m=2$, $e=t=2$ and $a=2$, we have a $[24,4,6]$ cyclic code over $\F_7$ with $N=4$. The weight hierarchy of this code is as follows
$$
d_1=6, d_2=12, d_3=18, d_4=24,
$$
which coincides with the result of Theorem \ref{thm-semi}.
\end{example}

\section{A Number-theoretic Approach to GHWs}\label{sec3}

Let $\cC$ be the cyclic code defined in Section~\ref{sec2}. In this section, we introduce a number-theoretic approach to the computation of GHWs of $\cC$. Firstly, we give a brief introduction to cyclic codes, group characters and Gauss periods. Secondly, we derive two general expressions closely related to the determination of GHWs, which will be used in our computation.

\subsection{Cyclic Codes}

Let $\cC$ be an $[n,k]$ linear code over $\Fq$ with $\gcd(n,q)=1$. $\cC$ is called a cyclic code, if $(c_0,c_1,\ldots,c_{n-1}) \in \cC$ implies its cyclic shift $(c_{n-1},c_0,\ldots,c_{n-2}) \in \cC$. For a cyclic code $\cC$, each codeword $(c_0,\ldots,c_{n-1})$ can be associated with a polynomial $\sum_{i=0}^{n-1} c_ix^i$ in the principal ideal ring $R_n:=\Fq[x]/(x^{n}-1)$. Under this correspondence, $\cC$ can be identified with an ideal of $R_n$. Hence, there is a unique monic polynomial $g(x) \in \Fq[x]$ with $g(x) \mid x^n-1$ such that $\cC=(g(x))R_n$ and $g(x)$ has the smallest degree among the elements in $\cC$. This $g(x)$ is called the {\em generator polynomial} of $\cC$, and $h(x)=\frac{x^n-1}{g(x)}$ is called the {\em parity-check polynomial} of $\cC$. When $R_n$ is specified, a cyclic code is uniquely determined by either the generator polynomial or the parity-check polynomial. $\cC$ is said to have $i$ {\em nonzeroes} if its parity-check polynomial can be factorized into a product of $i$ irreducible polynomials over $\Fq$. Thus, the cyclic codes defined in Section~\ref{sec2} may have arbitrary number of nonzeroes. A cyclic code is said to be {\em irreducible}, if it has only one nonzero. Otherwise, it is called a {\em reducible} cyclic code.

\subsection{Group Characters and Gauss Periods}

Let $q=p^s$ where $p$ is a prime number. The {\it canonical additive character} $\psi_q$ of $\Fq$ is given by
\begin{align*}
  \psi_q: \Fq & \longrightarrow \bb{C} \\
         x  & \mapsto \zp^{\Tqp(x)},
\end{align*}
where $\zp=\exp\left(2 \pi \sqrt{-1} /p\right)$ is a primitive $p$-th root of unity of $\bb{C}$, and $\Tqp$ is the trace function from $\Fq$ to $\Fp$. If $Q$ is a power of $q$, by the transitivity of trace functions, we have $\psi_Q=\psi_q \circ \TQq$.

Let $\ga$ be a primitive element of $\FQ$. For $N \mid (Q-1)$, denote by $\langle \ga^N \rangle$ the multiplicative subgroup of $\FQ^*$ generated by $\ga^N$. Then for any $0 \le i \le N-1$, $\CN_i=\ga^i \langle \ga^N \rangle$ is called the $i$-th {\it cyclotomy class} of $\FQ$. For any $a \in \FQ$, we define the {\it Gauss period} $\etaN_a$ as
$$
\etaN_a=\sum_{x \in \CN_0}\psi_Q(ax).
$$

\subsection{The First Expression}

Now, we are going to derive the first expression related to the GHWs of $\cC$.

By Delsarte's Theorem \cite{Del}, codewords of $\cC$ can be represented uniquely by $c(\ux)=(c_i(\ux))_{i=1}^n$, where $\ux=(x_1,x_2,\ldots,x_t)$ runs over the set $\FQ^t$ and
$$
c_i(\ux)=\TQq\left(\sum_{j=1}^t x_j\ga^{a_ji}\right), \quad 1 \le i \le n.
$$
In other words, the map
\begin{align*}
  \Psi: \FQ^t & \longrightarrow \cC \\
         \ux  & \mapsto c(\ux)
\end{align*}
is an isomorphism between two $\Fq$-vector spaces $\FQ^t$ and $\cC$, hence induces a 1-1 correspondence between $r$-dimensional $\Fq$-subspaces of $\FQ^t$ and $r$-dimensional subcodes of $\cC$ for any $1 \le r \le tm$. For any $\Fq$-vector space $M$, denote by $\left[{M \atop r}\right]$ the set of $r$-dimensional $\Fq$-subspaces of $M$.

For any $H_r \in \subsp$, define
$$
N(H_r)=|\{i : 1 \le i \le n, c_i(\ub)=0, \forall \ub \in H_r\}|,
$$
and for any $1 \le r \le tm$, define
$$
N_r=\max\left\{N(H_r) \mid H_r \in \subsp\right\}.
$$
Since $\Psi$ is an isomorphism, by definition, the $r$-th GHW of $\cC$ can be expressed as
\begin{equation}\label{exp-GHW0}
d_r:=d_r(\cC)=n-N_r
\end{equation}

Define $\be_j=\ga^{\frac{Q-1}{e}\De_j}$ for $1 \le j \le t$ and $g=\ga^a$. According to \cite[Section IV]{XLG}, we have the following expression
\begin{equation}\label{exp-GHW1}
N(H_r)=\frac{N}{e\de q^r} \sum_{\ub \in H_r}\sum_{h=1}^e \etaNg,
\end{equation}
where $\ub=(b_1,b_2,\ldots,b_t) \in H_r$. From now on, we always consider the case where $e=t \ge1$. In this case, the above expression can be further simplified as follows.

Define a linear transformation $\psi$ from $\FQ^t$ to $\FQ^t$ as
\begin{equation*}
\psi:\begin{pmatrix}
     b_1 \\
     b_2 \\
     \vdots \\
     b_t
  \end{pmatrix}
  \rightarrow
  \begin{pmatrix}
     y_1 \\
     y_2 \\
     \vdots \\
     y_t
  \end{pmatrix}
  =
  \begin{pmatrix}
    g & & & \\
      & g^2 &  &\\
      & & \ddots & \\
      & & & g^t
  \end{pmatrix}
  \begin{pmatrix}
    \be_1 & \be_2 & \cdots & \be_t \\
    \be_1^2  & \be_2^2 & \cdots & \be_t^2 \\
    \vdots  & \vdots & \ddots & \vdots \\
    \be_1^t & \be_2^t & \cdots & \be_t^t
  \end{pmatrix}
  \begin{pmatrix}
     b_1 \\
     b_2 \\
     \vdots \\
     b_t
  \end{pmatrix}
\end{equation*}
where
$$
y_h=g^h\sum_{j=1}^t b_j\be_j^h, 1 \le h \le t.
$$
Indeed, $\psi$ induces an isomorphism from $\FQ^t$ to $\FQ^t$
\begin{equation*}
\begin{pmatrix}
     b_1 \\
     b_2 \\
     \vdots \\
     b_t
  \end{pmatrix}
  \rightarrow
  \begin{pmatrix}
     y_1 \\
     y_2 \\
     \vdots \\
     y_t
  \end{pmatrix}
\end{equation*}
and permutes all $r$-dimensional subspaces of $\FQ^t$. Therefore, when $e=t \ge 1$, by (\ref{exp-GHW1}), we have
\begin{align*}
N(H_r)&=\frac{N}{t\de q^r} \sum_{\ub \in H_r}\sum_{h=1}^t \etaNg \\
      &=\frac{N}{t\de q^r} \sum_{\uy \in \psi(H_r)}\sum_{h=1}^t \etaNy,
\end{align*}
where $\uy=(y_1,y_2,\ldots,y_t) \in \psi(H_r)$. Note that
$$
\max\{N(H_r) \mid H_r \in \subsp\}=\max\{N(\psi^{-1}(H_r)) \mid H_r \in \subsp\}.
$$
For the sake of convenience, we rewrite $N(H_r)$ as
\begin{equation}\label{exp-GHW2}
N(H_r)=\frac{N}{t\de q^r} \sum_{\uy \in H_r}\sum_{h=1}^t \etaNy,
\end{equation}
which makes no essential difference in the computation of GHWs. This is our first expression concerning $N(H_r)$.

\subsection{The Second Expression}

In this subsection, we derive an alternative expression of $N(H_r)$ when $e=t \ge 1$. The main tool is the following bilinear form.

Let $\langle \cdot, \cdot \rangle: \FQ^t \times \FQ^t \to \Fq$ be a non-degenerate bilinear form given by
$$
\langle \ux ,\uy\rangle=\TQq\left(\sum_{i=1}^t x_iy_i\right),\quad \forall \, \ux=(x_1,\ldots,x_t), \uy=(y_1,\ldots,y_t) \in \FQ^t.
$$
Then for any $\Fq$-subspace $H$ of $\FQ^t$, we define
$$
\Hp=\{y \in \FQ^t \mid \lan \ux, \uy \ran=0, \forall \ux \in H \}.
$$
We have the following lemma.

\begin{lemma}{\rm \cite[Lemma 7]{XLG}}
Suppose $H \in \subsp$. Then $\dim_{\Fq}\Hp =tm-r$ and
$$
\frac{1}{q^{tm-r}}\sum_{\ux \in \Hp} \psi_q \left(\lan \ux, \uy\ran\right)=\begin{cases}
1 & \text{if $\uy \in H$},\\
0 & \text{if $\uy \not\in H$}.
\end{cases}
$$
\end{lemma}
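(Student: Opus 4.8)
The statement splits into two parts: the dimension formula $\dimq \Hp = tm-r$ and the character-sum identity. I would treat the non-degeneracy of $\lan \cdot, \cdot \ran$ as the single engine driving both. The plan is first to establish the dimension count. Viewing $\FQ^t$ as a $tm$-dimensional $\Fq$-space, the form induces an $\Fq$-linear map $\phi: \FQ^t \to (\FQ^t)^*$ sending $\uy$ to the functional $\lan \cdot, \uy \ran$; non-degeneracy says $\phi$ is injective, hence an isomorphism by equality of dimensions. I would then observe that $\Hp$ is precisely the kernel of the composite $\FQ^t \xrightarrow{\phi} (\FQ^t)^* \to H^*$, where the second arrow restricts functionals to $H$. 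Restriction is surjective onto $H^*$ (which has dimension $r$), and $\phi$ is an isomorphism, so the composite is surjective; therefore $\dimq \Hp = tm - r$.

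For the character-sum identity, I would fix $\uy \in \FQ^t$ and consider the map $\chi_{\uy}: \Hp \to \bb{C}$ given by $\ux \mapsto \psi_q(\lan \ux, \uy \ran)$. Since $\ux \mapsto \lan \ux, \uy \ran$ is $\Fq$-linear into $\Fq$ and $\psi_q$ is an additive character of $\Fq$, the map $\chi_{\uy}$ is an additive character of the finite abelian group $(\Hp, +)$. By the orthogonality relations for characters, $\sum_{\ux \in \Hp} \chi_{\uy}(\ux)$ equals $|\Hp| = q^{tm-r}$ when $\chi_{\uy}$ is trivial and $0$ otherwise, so after dividing by $q^{tm-r}$ the two cases give $1$ and $0$ as claimed. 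It then remains to determine exactly when $\chi_{\uy}$ is trivial. This happens iff $\lan \ux, \uy \ran = 0$ for all $\ux \in \Hp$, i.e. iff $\uy \in (\Hp)^{\perp}$. The crux is thus the double-complement identity $(\Hp)^{\perp} = H$: since the form is symmetric, $H \subseteq (\Hp)^{\perp}$ is immediate, and applying the dimension formula twice gives $\dimq (\Hp)^{\perp} = tm - (tm - r) = r = \dimq H$, forcing equality. Hence $\chi_{\uy}$ is trivial exactly when $\uy \in H$.

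I do not expect a genuine obstacle here, as the whole argument is standard linear algebra together with orthogonality of additive characters of a finite abelian group. The only structural input is non-degeneracy of the trace form, invoked both to make $\phi$ an isomorphism and to obtain $(\Hp)^{\perp} = H$. The single point demanding care is to keep the two arguments of the form straight and to use its symmetry when matching the trivial-character condition with membership in $H$ rather than in $\Hp$.
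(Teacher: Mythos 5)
Your proof is correct. Note that the paper itself gives no proof of this statement: it is quoted directly as \cite[Lemma 7]{XLG}, so there is no in-paper argument to compare against; your route---non-degeneracy of the trace form for the dimension count, then orthogonality of additive characters combined with the double-complement identity $(\Hp)^{\perp}=H$---is the standard one and is exactly the kind of argument the cited reference supplies. The one step you should spell out is the equivalence ``$\chi_{\uy}$ trivial if and only if $\lan \ux,\uy\ran=0$ for all $\ux\in\Hp$'': triviality of $\chi_{\uy}$ only says $\psi_q(\lan\ux,\uy\ran)=1$ for every $\ux\in\Hp$, and to upgrade this to vanishing of the form you need the observation that the image $\{\lan\ux,\uy\ran : \ux\in\Hp\}$ is an $\Fq$-subspace of $\Fq$, hence either $\zs$ or all of $\Fq$, while $\psi_q$ is a non-trivial character of $\Fq$ and so cannot be identically $1$ on all of $\Fq$. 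With that one line added, your argument is complete; it uses only the non-degeneracy and symmetry of the form, which is precisely the structural input available here.
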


By (\ref{exp-GHW2}) and the above lemma, we have the second expression concerning $N(H_r)$:
\begin{align}
N(H_r)=&\frac{N}{t\de q^r}\sum_{\uy \in H_r}\sum_{h=1}^t \etaNy\notag \\
      =&\frac{N}{t\de q^r}\sum_{\uy \in \FQ^t}\sum_{h=1}^t \etaNy \frac{1}{q^{tm-r}} \sum_{\ux \in \Hrp} \psi_q (\lan \ux, \uy \ran)\notag \\
      =&\frac{N}{t\de Q^t}\sum_{\ux \in \Hrp}\sum_{h=1}^t\sum_{z \in \CN_0}\sum_{\uy \in \FQ^t}\psi_Q(zy_h) \psi_q \left(\TQq\left(\sum_{i=1}^{t} x_iy_i\right)\right)\notag\\
      =&\frac{N}{t\de Q^t}\sum_{\ux \in \Hrp}\sum_{h=1}^t\sum_{z \in \CN_0}\sum_{y_1,\ldots,y_t \in \FQ} \psi_Q \left(zy_h+\sum_{i=1}^{t} x_iy_i\right)\notag\\
      =&\frac{N}{t\de}\sum_{\ux \in \Hrp}\sum_{h=1}^t\sum_{\substack{z \in \CN_0 \\ z+x_h=0 \\x_i=0, \forall \, i \ne h}}1\label{exp-GHW3}
\end{align}

Below, we will use this expression to determine the weight hierarchy of $\cC$.

\section{Proof of Theorem~\ref{thm-semi}}\label{sec4}

Now we are going to prove Theorem~\ref{thm-semi}. Throughout this section, we have the following assumptions.
\begin{itemize}
\item $m$ is even, $e=t \ge 1$ and $2 < N \le q^{\hm}$.
\item $j$ is the smallest positive integer such that $p^j \equiv -1 \pmod{N}$ and $\frac{sm}{2j}$ is odd.
\end{itemize}
Since $N \le q^{\hm}$ and $N \mid q^{\hm}+1$, we have $N \le \frac{q^{\hm}+1}{2}$.

For any $H_r \in \subsp$ and $1\le h \le t$, define
$$
W_h:=W_h(N)=\underbrace{\zs \times \cdots \times \zs}_{h-1} \times (-\underset{h}{\CN_0}) \times \underset{h+1}{\zs} \times \cdots \times \underset{t}{\zs}
$$
and
$$
U_h:=U_h(H_r)=\Hrp \bigcap \biggl( \underbrace{\zs \times \cdots \times \zs}_{h-1} \times \underset{h}{\FQ} \times \underset{h+1}{\zs} \times \cdots \times \underset{t}{\zs} \biggr).
$$
By (\ref{exp-GHW3}), we have
\begin{equation}\label{exp-GHW4}
N(H_r)=\frac{N}{t\de}\sum_{\ux \in \Hrp}\sum_{h=1}^t\sum_{\substack{z \in \CN_0 \\ z+x_h=0 \\x_i=0, \forall \, i \ne h}}1=\frac{N}{t\de}\sum_{h=1}^t |\Hrp \cap W_h|=\frac{N}{t\de}\sum_{h=1}^t |U_h \cap W_h|
\end{equation}

Note that $U_h$ is an $\Fq$-vector space. Given the dimension of $U_h$, as a first step, we need to consider the maximal size of the intersection $U_h \cap W_h$. To this end, the following lemma determines the maximal size of the intersection between a cyclotomy class $\CN_i$ and an $\Fq$-subspace of $\FQ$. Given a subset $A \subset \FQ$, we use $A^*$ to denote the set $A \sm \{0\}$.

\begin{lemma}\label{lem-intsemi}
  Let $j$ be the smallest positive integer such that $p^j\equiv -1 \pmod{N}$. Let $\frac{sm}{2j}$ be odd and $m$ be even. Let $0 \le l \le m$ and $L \subset \FQ$ be an $l$-dimensional $\Fq$-subspace. Define a function
$$
f(l):=\begin{cases}
  q^l-1 & \text{if $0 \le l \le \frac{m}{2}$}, \\
  \frac{q^l-1}{N}+\frac{(N-1)(q^{\frac{m}{2}}-q^{l-\frac{m}{2}})}{N} & \text{if $\frac{m}{2} \le l \le m$}.
\end{cases}
$$
Then, for $0 \le l \le m$,
$$
\max\left\{\left|L \cap \CN_i \right| : L \subset \FQ, \dimq(L)=l \right\}=f(l),
$$
where $0 \le i \le N-1$.
Furthermore, the subspace $L \subset \FQ$ that achieves the maximal value can be chosen as follows:
\begin{itemize}
\item[1)] If $0 \le l \le \hm$, then we can choose any $l$-dimensional subspace $L \subset \ga^{i}\Fqhm$;
\item[2)] If $\hm \le l \le m$, then we can choose any $l$-dimensional subspace $L$, which is a disjoint union of $q^{l-\hm}$ cosets of $\ga^i\Fqhm$, such that $\ga^i\Fqhm \subset L$.
\end{itemize}
\end{lemma}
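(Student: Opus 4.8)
The plan is to work over the subfield $K:=\Fqhm$, over which $\FQ$ is a two-dimensional vector space. First I would extract the one structural consequence of the semiprimitive hypotheses: since $q^{\hm}=(p^{j})^{sm/(2j)}\equiv(-1)^{sm/(2j)}=-1\pmod N$ (as $\frac{sm}{2j}$ is odd), we get $N\mid q^{\hm}+1$, hence $K^{*}=\lan\ga^{q^{\hm}+1}\ran\subseteq\lan\ga^{N}\ran=\CN_0$, and more generally $\ga^{i}K^{*}\subseteq\CN_i$. Because $\CN_0$ is stable under multiplication by $K^{*}$, each cyclotomy class is a disjoint union of punctured $K$-lines through the origin; writing $g:=(q^{\hm}+1)/N$, the class $\CN_i$ is exactly the union of $g$ such punctured lines. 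Each $K$-line is an $\Fq$-subspace of dimension $\hm$, distinct lines meet only in $0$ (they are the points of $\bb{P}^1(K)$), so for any $l$-dimensional $\Fq$-subspace $L\subseteq\FQ$ one has $|L\cap\CN_i|=\sum_{\ell}(q^{\dimq(L\cap\ell)}-1)$, the sum running over the $g$ lines $\ell$ comprising $\CN_i$; I will call these the \emph{good lines}.

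For the upper bound when $0\le l\le\hm$ there is nothing to prove: $|L\cap\CN_i|\le|L\sm\{0\}|=q^{l}-1=f(l)$. The case $\hm\le l\le m$ is the heart of the matter, and here a purely incidence-geometric attempt stalls, since the pairwise estimate $\dimq(L\cap\ell)+\dimq(L\cap\ell')\le l$ (from projecting modulo one line) is too weak to forbid many good lines from simultaneously meeting $L$ in large dimension. Instead I would pass to multiplicative characters. Expanding the indicator of $\CN_i$ over the characters $\chi$ of $\FQ^{*}$ with $\chi^{N}=\chi_0$ gives $|L\cap\CN_i|=\frac1N\sum_{\chi^N=\chi_0}\ol{\chi(\ga^i)}\,S_\chi(L)$ with $S_\chi(L)=\sum_{x\in L}\chi(x)$; the principal character contributes $\frac{q^{l}-1}{N}$. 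For nontrivial $\chi$ I would use Fourier inversion on the subspace: writing $\mathbf 1_L(x)=q^{-(m-l)}\sum_{y\in L^{\perp}}\psi_Q(xy)$, where $L^{\perp}=\{y:\TQq(xy)=0\ \forall x\in L\}$ has dimension $m-l$, a substitution turns the inner sum into a Gauss sum and yields the duality $S_\chi(L)=q^{-(m-l)}G(\chi)\,S_{\ol\chi}(L^{\perp})$. Since $|G(\chi)|=\sqrt Q=q^{\hm}$ and trivially $|S_{\ol\chi}(L^{\perp})|\le q^{m-l}-1$, this gives $|S_\chi(L)|\le q^{\hm}-q^{l-\hm}$, and summing the $N-1$ nontrivial terms produces exactly $|L\cap\CN_i|\le\frac{q^{l}-1}{N}+\frac{(N-1)(q^{\hm}-q^{l-\hm})}{N}=f(l)$.

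It remains to exhibit subspaces attaining $f(l)$, which simultaneously settles sharpness and identifies the extremal $L$. For $0\le l\le\hm$ I take any $l$-dimensional $L\subseteq\ga^{i}\Fqhm$; then $L\sm\{0\}\subseteq\ga^{i}K^{*}\subseteq\CN_i$, so $|L\cap\CN_i|=q^{l}-1=f(l)$. For $\hm\le l\le m$ I take any $L\supseteq\ga^{i}\Fqhm$ with $\dimq L=l$; such an $L$ is precisely a disjoint union of $q^{l-\hm}$ cosets of $\ga^{i}\Fqhm$. The good line $\ga^{i}K$ lies in $L$ and contributes $q^{\hm}-1$ points of $\CN_i$; for every other good line $\ell$, projecting by $\rho\colon\FQ\to\FQ/\ga^iK$ (whose kernel $\ga^iK$ lies in $L$, so $L=\rho^{-1}(\rho(L))$) and using that $\rho|_{\ell}$ is an $\Fq$-isomorphism onto $\FQ/\ga^iK$, I get $\dimq(L\cap\ell)=\dimq\rho(L)=l-\hm$. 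Summing over the $g$ good lines gives $|L\cap\CN_i|=(q^{\hm}-1)+(g-1)(q^{l-\hm}-1)=f(l)$, matching the two asserted families of extremal subspaces.

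The step I expect to be the crux is the upper bound for $\hm\le l\le m$: phrased as a question about how many good lines an arbitrary $\Fq$-subspace can meet in large dimension, it looks like a delicate incidence problem, and the naive dimension count provably fails. The device that dissolves it is the Gauss-sum duality $S_\chi(L)=q^{-(m-l)}G(\chi)\,S_{\ol\chi}(L^{\perp})$, which trades the intractable sum over $L$ for a trivially estimated sum over the low-dimensional dual $L^{\perp}$, and uses the Gauss sum only through its modulus $\sqrt Q$. It is worth noting that the upper bound itself needs neither semiprimitivity nor the line decomposition; those hypotheses enter only via $N\mid q^{\hm}+1$, which is exactly what forces the constructed subspaces into a single cyclotomy class and hence makes the bound attained.
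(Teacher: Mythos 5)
Your proof is correct, but your upper-bound argument for $\hm \le l \le m$ --- the part you call the crux --- takes a genuinely different route from the paper's, and your commentary on incidence methods misjudges what the paper actually does. The paper's proof \emph{is} an elementary dimension count: writing $L_h=\ga^h\Fqhm$ for the $q^{\hm}+1$ lines over $\Fqhm$, it never bounds the intersections with the good lines from above; instead it bounds the intersection with \emph{every} line from below, $|L\cap L_h^*|\ge q^{l-\hm}-1$ (immediate from $\dimq(L\cap L_h)\ge l+\hm-m$), applies this to the $(N-1)(q^{\hm}+1)/N$ punctured lines comprising $\FQ^*\sm\CN_0$, and subtracts the resulting lower bound on $|L\cap(\FQ^*\sm\CN_0)|$ from $|L^*|=q^l-1$, which yields exactly $f(l)$. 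So you are right that pairwise estimates among the good lines are too weak, but the ``delicate incidence problem'' dissolves by counting the complement, with no character sums at all. Your route --- expanding the indicator of $\CN_i$ in multiplicative characters and using the duality $S_\chi(L)=q^{-(m-l)}G(\chi)S_{\ol{\chi}}(L^\perp)$ together with $|G(\chi)|=\sqrt{Q}$ and the trivial bound on the dual side --- is also valid (note that the two candidate duals, defined via $\TQq$ or via $\TQp$, coincide for $\Fq$-subspaces, so the Fourier step is sound), and it buys genuine generality: it establishes the upper bound for every $N\mid Q-1$, semiprimitivity entering only through attainability, whereas the paper's complement count needs $N\mid q^{\hm}+1$ from the start so that $\FQ^*\sm\CN_0$ is a union of punctured lines. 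The trade-off is Gauss-sum machinery versus the paper's purely linear-algebraic count. Your attainability construction coincides with the paper's; the paper verifies $|L\cap L_h^*|=q^{l-\hm}-1$ for $h\ne 0$ by observing that each coset $w_u+\Fqhm$ meets the line $L_h$ in at most one point, which is equivalent to your projection modulo $\ga^i\Fqhm$.
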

\begin{proof}
Suppose there exists an $l$-dimensional subspace $L$ such that $|L \cap \CN_0|$ is maximal. Then, for $0 \le i \le N-1$, $\ga^{i}L$ is an $l$-dimensional subspace such that $|\ga^{i}L \cap \CN_i|=|L \cap \CN_0|$ is also maximal. Hence, it suffices to consider the case $i=0$.

Since $p^j \equiv -1 \pmod{N}$ and $\frac{sm}{2j}$ is odd, we have $q^{\hm} \equiv -1 \pmod{N}$, which implies $\Fqhm^* \subset \CN_0$. When $1 \le l \le \hm$, by choosing a subspace $L \subset \Fqhm$, we have $|L \cap \CN_0|=q^l-1$, which is clearly maximal.

When $\hm \le l \le m$, for $0 \le h \le q^{\hm}$, define $L_h=\ga^h\Fqhm$. Since $\dimq(L)=l \ge \hm$ and $\dimq(L_h)=\hm$, we have
\begin{equation}\label{ineqn-intsemi1}
|L \cap L_h^*| \ge q^{l-\hm}-1, \quad \forall \, 0 \le h \le q^{\hm}.
\end{equation}
Note that
$$
\CN_0=\cup_{v=0}^{\frac{q^{\hm}+1}{N}-1}L_{vN}^*, \quad \FQ^* \sm \CN_0=\cup_{u=1}^{N-1}\cup_{v=0}^{\frac{q^{\hm}+1}{N}-1}L_{u+vN}^*.
$$
By (\ref{ineqn-intsemi1}), we have
\begin{align*}
|L \cap (\FQ^* \sm \CN_0)|&=|L\cap(\cup_{u=1}^{N-1}\cup_{v=0}^{\frac{q^{\hm}+1}{N}-1}L_{u+vN}^*)| \\
                          &=\sum_{u=1}^{N-1}\sum_{v=0}^{\frac{q^{\hm}+1}{N}-1} |L \cap L_{u+vN}^*| \\
                          &\ge \frac{(N-1)(q^{\hm}+1)(q^{l-\hm}-1)}{N},
\end{align*}
which implies
\begin{align}
|L \cap \CN_0| &\le q^l-1-\frac{(N-1)(q^{\hm}+1)(q^{l-\hm}-1)}{N} \notag \\
               &=\frac{q^l-1}{N}+\frac{(N-1)(q^{\frac{m}{2}}-q^{l-\frac{m}{2}})}{N} \label{exp-upper}
\end{align}
Next, we are going to show that by choosing a proper subspace $L$, the upper bound (\ref{exp-upper}) can be achieved.

Let $L$ be an $l$-dimensional subspace, which consists of $q^{l-\hm}$ disjoint cosets of $\Fqhm$ and contains $\Fqhm$. Thus, we can write $L=\cup_{u=0}^{q^{l-\hm}-1} (w_u+\Fqhm)$, where $w_0=0$. For $0 \le u \le q^{l-\hm}-1$ and $1 \le h \le q^{\hm}$, we claim $|(w_u+\Fqhm) \cap L_h| \le 1$. Suppose $w_u+f_1, w_u+f_2 \in L_h$, where $f_1, f_2 \in \Fqhm$. For $1 \le h \le q^{\hm}$, since $L_h$ is a linear space, we have $(w_u+f_1)-(w_u+f_2)=f_1-f_2 \in L_h \cap \Fqhm =\{0\}$. Thus, the claim is true. Morevoer, for $1 \le h \le q^{\hm}$, we have
\begin{align*}
|L \cap L_h^*|&=|(\cup_{u=0}^{q^{l-\hm}-1} (w_u+\Fqhm)) \cap L_h^*| \\
              &=|(\cup_{u=1}^{q^{l-\hm}-1} (w_u+\Fqhm)) \cap L_h^*| \\
              &=\sum_{u=1}^{q^{l-\hm}-1} |(w_u+\Fqhm) \cap L_h^*| \\
              &\le q^{l-\hm}-1.
\end{align*}
Comparing with (\ref{ineqn-intsemi1}), we have $|L \cap L_h^*|=q^{l-\hm}-1$ for $1 \le h \le q^{\hm}$. Together with $|L \cap L_0^*|=q^{\hm}-1$, we get
\begin{align*}
|L \cap \CN_0|=&|L \cap L_0^*|+\sum_{v=1}^{\frac{q^{\hm}+1}{N}-1}|L \cap L_{vN}^*| \\
              =&\frac{q^l-1}{N}+\frac{(N-1)(q^{\frac{m}{2}}-q^{l-\frac{m}{2}})}{N}.
\end{align*}
Therefore, $|L \cap \CN_0|$ achieves the upper bound (\ref{exp-upper}).
\end{proof}

For $1 \le i \le t$, define $u_i=\dim_{\Fq}(U_i)$, where $0 \le u_i \le m$. By definition, we have
$$
H_r^{\perp} \supset \bigoplus_{i=1}^t U_i,
$$
which implies $\sum_{i=1}^{t} u_i \le tm-r$. By Lemma~\ref{lem-intsemi}, it is easy to see that the maximal size of the intersection $U_i \cap W_i$ equals $f(u_i)$. Therefore, by (\ref{exp-GHW4}), we have $N(H_r) \le \frac{N}{t\de}\sum_{i=1}^t f(u_i)$. To make $N(H_r)$ as large as possible, we must have $\sum_{i=1}^{t} u_i = tm-r$, which implies
\begin{equation}
H_r^{\perp}=\bigoplus_{i=1}^t U_i. \label{exp-subsp}
\end{equation}
Moreover, for each $1 \le i \le t$, $U_i$ is chosen so that $|U_i \cap W_i|=f(u_i)$. Consequently, we have
\begin{equation}
\max\{ N(H_r) \mid H_r \in \subsp \}=\frac{N}{t\de}\max\{ \sum_{i=1}^t f(u_i) \mid \sum_{i=1}^{t} u_i = tm-r \} \label{exp-max}
\end{equation}
From the viewpoint of (\ref{exp-GHW4}), (\ref{exp-subsp}) and (\ref{exp-max}), the subspace $H_r^{\perp}$ corresponding to the maximal $N(H_r)$ can be characterized by the sequence $\uu=(u_1, u_2,\ldots,u_t)$, where $\sum_{i=1}^{t} u_i = tm-r$. Without loss of generality, we assume that $m \ge u_1 \ge u_2 \ge \cdots \ge u_t \ge 0$. Since $1 \le r \le tm$, we write $\sum_{i=1}^{t} u_i = tm-r =r_1m+r_2$ for some unique $0 \le r_1 <t$ and $0 \le r_2 <m$.

Next, we are going to study which sequence $\uu=(u_1, u_2,\ldots,u_t)$ leads to the maximal $N(H_r)$. As a preparation, we define two operations on the sequence $\uu$. Suppose for some $1 \le i < j \le t$, $\uu=(u_1,u_2,\ldots,u_t)$ satisfies $m > u_i \ge u_j>0$. Then define an operation $\cS_{ij}$ on $\uu$ as
$$
\cS_{ij}(\uu)=(u_1,\ldots,u_{i-1},u_i+1,u_{i+1},\ldots,u_{j-1},u_j-1,u_{j+1},\ldots,u_t).
$$
Suppose $\uu$ is of the form
$$
\uu=(u_1,\ldots,u_i,\underbrace{\hm,\ldots,\hm}_{l},u_j,\ldots,u_t),
$$
where $l \ge 2$. Then define an operation $\cS$ on $\uu$ as
$$
\cS(\uu)=(m,u_1,\ldots,u_i,\underbrace{\hm,\ldots,\hm}_{l-2},u_j,\ldots,u_t,0).
$$
Furthermore, for $\uu=(u_1,u_2,\ldots,u_t)$, we define
$$
T(\uu)=\sum_{i=1}^{t} f(u_i),
$$
where $f$ is the function defined in Lemma~\ref{lem-intsemi}. Now, we have the following lemma concerning the change of the summation $T$ when the operations $\cS_{ij}$ and $\cS$ applied. Recall that $2 < N \le \frac{q^{\hm}+1}{2}$, and we use $0 \le v \le \hm-1$ to denote the unique integer such that $q^v \le \frac{q^{\hm}+1}{N}-1 <q^{v+1}$.

\begin{lemma}\label{lem-semiop}
Let $\uu=(u_1,u_2,\ldots,u_t)$ satisfy $m \ge u_1 \ge \ldots \ge u_t \ge 0$. We have the following.
\begin{itemize}
\item[1)] If $\hm \ge u_i+1>u_i\ge u_j>u_j-1 \ge 0$, then $T(\cS_{ij}(\uu))\ge T(\uu)$,
\item[2)] If $m \ge u_i+1>u_i\ge\hm\ge u_j>u_j-1 \ge 0$, then
          $$
          T(\cS_{ij}(\uu))\begin{cases}
              \ge T(\uu) & \mbox{if $u_i-u_j \ge \hm-v-1$,}\\
              \le T(\uu) & \mbox{if $u_i-u_j \le \hm-v-2$,}
          \end{cases}
          $$
\item[3)] If $m \ge u_i+1>u_i\ge u_j>u_j-1 \ge \hm$, then $T(\cS_{ij}(\uu))\ge T(\uu)$.
\item[4)] If $\uu=(u_1,\ldots,u_i,\underbrace{\hm,\ldots,\hm}_{l},u_j,\ldots,u_t)$ with $l \ge 2$, then $T(\cS(\uu)) \ge T(\uu)$.
\end{itemize}
\end{lemma}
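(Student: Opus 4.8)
The plan is to reduce everything to the behaviour of the single-variable forward difference $g(l):=f(l+1)-f(l)$, since each operation alters only two coordinates of $\uu$. First I would compute $g$ on each branch of $f$: for $0\le l\le\hm-1$ one gets $g(l)=q^l(q-1)$, while for $\hm\le l\le m-1$ a short calculation gives $g(l)=\frac{(q-1)q^{l-\hm}(q^{\hm}+1-N)}{N}$. Because $N\le\frac{q^{\hm}+1}{2}<q^{\hm}+1$, the factor $q^{\hm}+1-N$ is positive, so $g$ is strictly positive and strictly increasing within each branch; the only subtlety is the jump across $l=\hm$.

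For the operation $\cS_{ij}$ the key identity is $T(\cS_{ij}(\uu))-T(\uu)=g(u_i)-g(u_j-1)$, obtained by cancelling the unchanged coordinates. In case 1) both $u_i$ and $u_j-1$ lie in $[0,\hm-1]$, so $g(u_i)-g(u_j-1)=(q-1)(q^{u_i}-q^{u_j-1})\ge0$ since $u_i\ge u_j-1$; case 3) is identical but on the upper branch, and again $u_i\ge u_j-1$ forces the difference to be nonnegative. These two cases are essentially monotonicity of $g$ on a single branch.

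The hard part is case 2), where $g(u_i)$ comes from the upper branch and $g(u_j-1)$ from the lower branch, so no single monotonicity argument applies. Here I would factor out $q-1$ and reduce the sign of $T(\cS_{ij}(\uu))-T(\uu)$ to comparing $q^{u_i-\hm}\cdot\frac{q^{\hm}+1-N}{N}$ with $q^{u_j-1}$. Writing $\frac{q^{\hm}+1-N}{N}=\frac{q^{\hm}+1}{N}-1$ and invoking the defining inequalities $q^v\le\frac{q^{\hm}+1}{N}-1<q^{v+1}$, I would bound this quantity below by $q^{u_i-\hm+v}$ and above by $q^{u_i-\hm+v+1}$. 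Then $u_i-u_j\ge\hm-v-1$ yields $u_i-\hm+v\ge u_j-1$, hence $T(\cS_{ij}(\uu))\ge T(\uu)$, whereas $u_i-u_j\le\hm-v-2$ yields $u_i-\hm+v+1\le u_j-1$ and the reverse (in fact strict) inequality. This bookkeeping around the threshold $\hm-v-1$ is the only genuinely delicate step, and it is precisely where the auxiliary integer $v$ enters.

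Finally, case 4) is a direct evaluation: since two entries equal to $\hm$ are replaced by one $m$ and one $0$,
$$
T(\cS(\uu))-T(\uu)=f(m)+f(0)-2f(\hm).
$$
Using $f(0)=0$, $f(\hm)=q^{\hm}-1$, and $f(m)=\frac{q^m-1}{N}$ (the second term of $f(m)$ vanishes because $q^{m-\hm}=q^{\hm}$), this equals $\frac{(q^{\hm}-1)(q^{\hm}+1-2N)}{N}$, which is nonnegative exactly because $N\le\frac{q^{\hm}+1}{2}$. I expect case 2) to be the main obstacle, since it is the only place where the interplay between the two branches of $f$ and the integer $v$ must be tracked exactly; the remaining cases follow from monotonicity of $g$ or a one-line evaluation.
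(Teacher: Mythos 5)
Your proposal is correct, and there is nothing in the paper to compare it against: the paper dismisses this lemma with ``The proof is elementary and omitted here,'' so your argument actually supplies the missing verification. I checked the computations: the forward differences are $g(l)=q^l(q-1)$ for $0\le l\le \hm-1$ and $g(l)=\frac{(q-1)q^{l-\hm}(q^{\hm}+1-N)}{N}$ for $\hm\le l\le m-1$ (the two branches of $f$ agree at $l=\hm$, so there is no ambiguity at the seam); the identity $T(\cS_{ij}(\uu))-T(\uu)=g(u_i)-g(u_j-1)$ reduces cases 1) and 3) to monotonicity of $g$ on a single branch; your handling of case 2) via $q^v\le\frac{q^{\hm}+1}{N}-1<q^{v+1}$ correctly converts the hypothesis $u_i-u_j\ge \hm-v-1$ (resp.\ $\le \hm-v-2$) into $q^{u_i-\hm}\cdot\frac{q^{\hm}+1-N}{N}\ge q^{u_j-1}$ (resp.\ $<q^{u_j-1}$), which is exactly the sign of the difference after factoring out $q-1$; and in case 4) the evaluation $f(m)+f(0)-2f(\hm)=\frac{(q^{\hm}-1)(q^{\hm}+1-2N)}{N}\ge 0$ is right, with nonnegativity resting on $N\le\frac{q^{\hm}+1}{2}$, which the paper establishes just before the lemma from $N\mid q^{\hm}+1$ and $N\le q^{\hm}$. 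One small point worth making explicit if you write this up: in case 2) the endpoint configurations $u_i=\hm$ and $u_j=\hm$ are allowed simultaneously, and your branch assignments ($g(u_i)$ from the upper branch, $g(u_j-1)$ from the lower) remain the correct ones there, so no separate treatment is needed.
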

\begin{proof}
The proof is elementary and omitted here.
\end{proof}

For the sake of convenience, we define the operation $\cS_{ij}$ in 1) of above lemma as $\cS_{ij}^{1}$. Similarly, we can define $\cS_{ij}^{2}$ and $\cS_{ij}^3$. The above lemma indicates that when the operations $\cS_{ij}^1$, $\cS_{ij}^3$ and $\cS$ are employed, the summation $T$ is nondecreasing. When the operation $\cS_{ij}^2$ is employed, the situation is more involved. To be more precise, we define an inverse operation $\wti{\cS_{ij}^2}$ of $\cS_{ij}^2$ as follows. Suppose for some $1 \le i < j \le t$, $\uu=(u_1,u_2,\ldots,u_t)$ satisfies $m \ge u_i>u_i-1 \ge \hm \ge u_j+1 > u_j \ge 0$. Then define an operation $\wti{\cS_{ij}^2}$ on $\uu$ as
$$
\wti{\cS_{ij}^2}(\uu)=(u_1,\ldots,u_{i-1},u_i-1,u_{i+1},\ldots,u_{j-1},u_j+1,u_{j+1},\ldots,u_t).
$$

Given a sequence $\uu=(u_1,u_2,\ldots,u_t)$ satisfying $m \ge u_i+1>u_i \ge \hm \ge u_j > u_j-1 \ge 0$, we have
$$
\wti{\cS_{ij}^2}(\cS_{ij}^2(\uu))=\uu.
$$
Given a sequence $\uu=(u_1,u_2,\ldots,u_t)$ satisfying $m \ge u_i>u_i-1 \ge \hm \ge u_j+1 > u_j \ge 0$, we have
$$
\cS_{ij}^2(\wti{\cS_{ij}^2}(\uu))=\uu.
$$
Hence, the operation $\wti{\cS_{ij}^2}$ can be viewed as an inverse of $\cS_{ij}^2$. The following remark restates 2) of Lemma~\ref{lem-semiop}.

\begin{remark}\label{rem-semiop}
Let $\uu=(u_1,u_2,\ldots,u_t)$ be a sequence. If $m \ge u_i+1>u_i\ge\hm\ge u_j>u_j-1 \ge 0$ and $u_i-u_j \ge \hm-v-1$, then
$$
T(\cS_{ij}^2(\uu)) \ge T(\uu).
$$
If $m \ge u_i>u_i-1 \ge \hm \ge u_j+1 > u_j \ge 0$ and $u_i-u_j \le \hm-v-2$, then
$$
T(\wti{\cS_{ij}^2}(\uu)) \ge T(\uu).
$$
\end{remark}
Therefore, if $u_i-u_j \ge \hm-v-1$ (resp. $u_i-u_j \le \hm-v-2$), the summation $T$ is nondecreasing when the operation $\cS_{ij}^2$ (resp. $\wti{\cS_{ij}^2}$) applied.

We are going to show that any sequence $\uu=(u_1,u_2,\ldots,u_t)$ can be transformed to one of a few sequences with special forms, by using operations $\cS_{ij}^1$, $\cS_{ij}^2$, $\wti{\cS_{ij}^2}$, $\cS_{ij}^3$ and $\cS$. Moreover, with the help of Lemma~\ref{lem-semiop} and Remark~\ref{rem-semiop}, we make sure that each operation used in the transformation keeps the summation $T$ nondecreasing. Hence, the sequence producing the maximal value of $N(H_r)$ is among a few sequences with special forms. Next, we describe this transformation process.

Given any sequence $\uu=(u_1,u_2,\ldots,u_t)$, for the entries greater than (resp. less than) $\hm$, we apply the operation $\cS_{ij}^1$ (resp. $\cS_{ij}^3$) repeatedly. Then, we can always get a sequence of the form
$$
\ul{u_1}=(\underbrace{m,\ldots,m}_{l_1},a,\underbrace{\hm,\ldots,\hm}_{l_2},b,\underbrace{0,\ldots,0}_{l_3}),
$$
where $m >a \ge \hm > b \ge 0$ and $l_1,l_2,l_3 \ge 0$. Applying the operation $\cS$ to $\ul{u_1}$ repeatedly, we have the one of the following two cases:
\begin{equation}
  \mbox{Case A):} \quad \mbox{$l_2$ even,} \quad \ul{u_2}=(\underbrace{m,\ldots,m}_{l_1+\frac{l_2}{2}},a,b,\underbrace{0,\ldots,0}_{l_3}), \label{exp-seq1}
\end{equation}
where $m >a \ge \hm > b \ge 0$ and
$$
  \mbox{Case B):} \quad \mbox{$l_2$ odd,} \quad \ul{u_2}=(\underbrace{m,\ldots,m}_{l_1+\frac{l_2-1}{2}},a,\hm,b,\underbrace{0,\ldots,0}_{l_3}),
$$
where $m >a \ge \hm > b \ge 0$.

Next, by using the operations $\cS_{ij}^2$, $\wti{\cS_{ij}^2}$ and $\cS$, the sequence in Case A) or Case B) can be further transformed to one of the following four cases. We have Cases A1) and A2) which can be derived from Case A) and have Cases B1) and B2) which can be derived from Case B). Recall that $tm-r=r_1m+r_2$, where $0 \le r_1 \le t-1$ and $0 \le r_2 \le m-1$. We observe that each operation involved in the transformation keeps the summation $T$ nondecreasing.
\begin{align*}
&\mbox{Case A1): $l_2$ even, $\hm \le a+b <m$, $\hm \le r_2 < m$, $r_1=l_1+\frac{l_2}{2}$} \\
&\ul{u_3}=\begin{cases}
             (\underbrace{m,\ldots,m}_{r_1},r_2,0,\ldots,0) & \mbox{if $a-b\ge\hm-v-1$} \\
             (\underbrace{m,\ldots,m}_{r_1},\hm,r_2-\hm,0,\ldots,0) & \mbox{if $a-b\le\hm-v-2$}
          \end{cases}   \\
&\mbox{Case A2): $l_2$ even, $m \le a+b <\frac{3m}{2}$, $0 \le r_2 < \hm$, $r_1=l_1+\frac{l_2}{2}+1$} \\
&\ul{u_3}=\begin{cases}
             (\underbrace{m,\ldots,m}_{r_1-1},m,r_2,0,\ldots,0) & \mbox{if $a-b\ge\hm-v-1$} \\
             (\underbrace{m,\ldots,m}_{r_1-1},r_2+\hm,\hm,0,\ldots,0) & \mbox{if $a-b\le\hm-v-2$}
          \end{cases} \\
&\mbox{Case B1): $l_2$ odd, $\hm \le a+b <m$, $0 \le r_2 < \hm$, $r_1=l_1+\frac{l_2+1}{2}$} \\
&\ul{u_3}=\begin{cases}
             (\underbrace{m,\ldots,m}_{r_1-1},r_2+\hm,\hm,0,\ldots,0) & \mbox{if $a-b\ge\hm-v-1$} \\
             (\underbrace{m,\ldots,m}_{r_1-1},m,r_2,0,\ldots,0) & \mbox{if $a-b\le\hm-v-2$}
          \end{cases} \\
&\mbox{Case B2): $l_2$ odd, $m \le a+b <\frac{3m}{2}$, $\hm \le r_2 < m$, $r_1=l_1+\frac{l_2+1}{2}$} \\
&\ul{u_3}=\begin{cases}
             (\underbrace{m,\ldots,m}_{r_1},\hm,r_2-\hm,0,\ldots,0) & \mbox{if $a-b\ge\hm-v-1$} \\
             (\underbrace{m,\ldots,m}_{r_1},r_2,0,\ldots,0) & \mbox{if $a-b\le\hm-v-2$}
          \end{cases}
\end{align*}

For instance, let us see how the sequences in Case A1) can be derived. In Case A), the sequence is of the form (\ref{exp-seq1}), where $\hm \le a+b <\frac{3m}{2}$. If $\hm \le a+b <m$, we have $r_2=a+b$, $\hm \le r_2 < m$ and $r_1=l_1+\frac{l_2}{2}$. If $a-b \ge \hm-v-1$, applying the operation $\cS_{ij}^2$ repeatedly gives
$$
\ul{u_3}=(\underbrace{m,\ldots,m}_{r_1},r_2,0,\ldots,0).
$$
If $a-b \le \hm-v-2$, applying the operation $\wti{\cS_{ij}^2}$ repeatedly gives
$$
\ul{u_3}=(\underbrace{m,\ldots,m}_{r_1},\hm,r_2-\hm,0,\ldots,0).
$$
According to Lemma~\ref{lem-semiop} and Remark~\ref{rem-semiop}, each operation involved in the transformation keeps the summation $T$ nondecreasing. Hence, the sequences in Case A1) have been obtained. Similarly, we can derive the corresponding sequences for the remaining three Cases A2), B1) and B2).

%

Therefore, we have shown that any sequence $\uu$ can be transformed to one of the above four Cases A1), A2), B1) and B2). Since each operation involved in the transformation keeps the summation $T$ nondecreasing, the sequence leading to the maximal value of $N(H_r)$ must belong to one of the four cases.

If $0 \le r_2 < \hm$, considering Cases A2) and B1), a direct computation shows
$$
T((\underbrace{m,\ldots,m}_{r_1-1},m,r_2,0,\ldots,0)) \ge T((\underbrace{m,\ldots,m}_{r_1-1},r_2+\hm,\hm,0,\ldots,0)).
$$
Thus, when $0 \le r_2 < \hm$, the sequence
$$
(\underbrace{m,\ldots,m}_{r_1},r_2,0,\ldots,0)
$$
leads to the maximal value of $N(H_r)$. Consequently, by (\ref{exp-max}), we have
\begin{align*}
N_r&=\max\{ N(H_r) \mid H_r \in \subsp\} \\
&=\frac{N}{t\de}T((\underbrace{m,\ldots,m}_{r_1},r_2,0,\ldots,0))\\
&=\frac{N}{t\de}(r_1\frac{q^m-1}{N}+f(r_2))=\frac{r_1(q^m-1)+N(q^{r_2}-1)}{t\de}.
\end{align*}

If $\hm \le r_2 < m$, considering Cases A1) and B2), a direct computation shows
$$
T((\underbrace{m,\ldots,m}_{r_1},r_2,0,\ldots,0)) \ge T((\underbrace{m,\ldots,m}_{r_1},\hm,r_2-\hm,0,\ldots,0)).
$$
Thus, when $\hm \le r_2 < m$, the sequence
$$
(\underbrace{m,\ldots,m}_{r_1},r_2,0,\ldots,0)
$$
leads to the maximal value of $N(H_r)$. Consequently, by (\ref{exp-max}), we have
\begin{align*}
N_r&=\max\{ N(H_r) \mid H_r \in \subsp \}\\
&=\frac{N}{t\de}T((\underbrace{m,\ldots,m}_{r_1},r_2,0,\ldots,0))\\
&=\frac{N}{t\de}(r_1\frac{q^m-1}{N}+f(r_2)) \\ &=\frac{r_1(q^m-1)+q^{r_2}-1+(N-1)(q^{\hm}-q^{r_2-\hm})}{t\de}.
\end{align*}

Together with (\ref{exp-GHW0}), the proof of Theorem~\ref{thm-semi} is complete.

\section{Conclusion}\label{sec5}

The generalized Hamming weights are fundamental parameters of linear codes. They convey the structural information of a linear code and determine its performance in various applications. However, the computation of the GHWs of linear codes is difficult in general. This paper is a sequel of \cite{XLG} and studies the GHWs of a family of cyclic codes introduced in \cite{YXDL}, which may have arbitrary number of nonzeroes. We determine the weight hierarchy by generalizing a number-theoretic approach proposed in \cite{YLFL}. It is worthy to note that our main theorem can be regarded as an extension of the known results concerning the weight hierarchy of semiprimitive codes.

A very interesting question is, whether the techniques in this paper can be applied to some more complicated cases. Recall that two crucial conditions in our main theorem are $p$ being semiprimitive modulo $N$ and $e=t \ge 1$. We ask if the weight hierarchy can also be computed, when $p$ is semiprimitive mod $N$ and $e > t \ge 1$, or, when $p$ modulo $N$ belongs to the Index $2$ case, namely, $p$ generates an index $2$ subgroup of the multiplicative group of units in $\Z_{N}$.

\section*{Acknowledgement}

The author wishes to express his gratitude to Professor Cunsheng Ding and Professor Maosheng Xiong, for their guidance and encouragement during his stay at the Hong Kong University of Science and Technology.



\end{document}